\newcommand{\oneset}{$k$-\textsc{Value 1-Set}}
\newcommand{\maxset}{\textsc{Max} 3-\textsc{Value $r$-Set}}
\newcommand{\minset}{\textsc{Min} 3-\textsc{Value $r$-Set}}
\newcommand{\kmc}{$k$-\textsc{Multicolored Clique}}
\newcommand{\rdm}{\textsc{Perfect Multi-Dimensional Matching}}
\newcommand{\mrdm}{\textsc{Maximum Multi-Dimensional Matching}}
\newcommand{\pack}{\textsc{Set Packing}}
\newcommand{\eds}{\textsc{Independent Edge Dominating Set}}
\newcommand{\tpset}{\textsc{2P 3-Value Set}}
\newcommand{\arc}{\textsc{Arc Kayles}}
\newcommand{\node}{\textsc{Node Kayles}}
\begin{document}

\title{The Computational Complexity of the Game of Set and its Theoretical Applications}
\author{Michael Lampis\inst{1}, Valia Mitsou\inst{2}}
\institute{Research Institute for Mathematical Sciences (RIMS), Kyoto University \\\email{mlampis@kurims.kyoto-u.ac.jp} \and CUNY Graduate Center \\\email{vmitsou@gc.cuny.edu}}

\maketitle

\begin{abstract}

The game of SET is a popular card game in which the objective is to form Sets using cards from a special deck. In this paper we study single- and multi-round variations of this game from the computational complexity point of view and establish interesting connections with other classical computational problems. 

Specifically, we first show that a natural generalization of the problem of finding a single Set, parameterized by the size of the sought Set is W-hard; our reduction applies also to a natural parameterization of \rdm, a result which may be of independent interest. Second, we observe that a version of the game where one seeks to find the largest possible number of disjoint Sets from a given set of cards is a special case of 3-\pack; we establish that this restriction remains NP-complete. Similarly, the version where one seeks to find the smallest number of disjoint Sets that overlap all possible Sets is shown to be NP-complete, through a close connection to the \eds\ problem. Finally, we study a 2-player version of the game, for which we show a close connection to \arc, as well as fixed-parameter tractability when parameterized by the number of rounds played. 

\end{abstract}

\section{Introduction}

In this paper, we analyze the computational complexity of some variations of
the game of SET and its interesting relations with other classical problems,
like \rdm , \pack , and \eds. 

The game of SET is a card game in which players seek to form Sets of cards from
a special deck. Each card from this deck has a picture with 4 attributes
(shape, color, number, shading), and each attribute can take one of 3 values
(for example the shape can be oval, squiggle, or diamond, the color can be
blue, green, or purple, etc). To create a \emph{Set}\footnote{The first letter
of Set is capitalized to avoid a mix-up with the notion of mathematical set},
the player needs to identify 3 cards in which, for each attribute
independently, either all cards agree on the value, or they constitute a
rainbow of all possible values. In a single round of the normal play, 12 cards
are dealt and the players seek (simultaneously) a Set. The first player to find
a Set wins the 3 cards constituting it. Then 3 new cards are dealt in the old
ones' places and the game continues with the next round.
For more information regarding the game and its rules as well as for other variations see the official website of the game \url{http://www.setgame.com/set/index.html}. 

The game of SET has gained remarkable attention and popularity (especially
among mathematicians) as well as many awards. The game has been the subject of
both educational and technical research. A broad set of educational activities
has been suggested, a collection of which can be found 
in \cite{education}.
Furthermore, the game has been studied extensively from a more technical
mathematical point of view, considering questions like ``what is the maximum
number of cards with $n$ attributes and 3 values that can be laid such that no
Sets are formed'' \cite{Davis_thecard}, or ``for fixed $n$, how many
non-isomorphic collections of $n$ cards are there'' \cite{coleman2012game}). In
\cite{Zabrocki_thejoy}, many other similar questions are posed. In addition to
the game's popularity, one motivation for this intense study is that the
problem has a very natural alternative mathematical formulation: if one
describes the cards as four-dimensional vectors over the set $\{0,1,2\}$, then
a Set is exactly a collection of three collinear points, that is, three points
whose vectors add up to $0 (\bmod 3)$.  Nevertheless, the first and - to the
best of our knowledge - only attempt to consider the game's computational
complexity was made by Chaudhuri et al \cite{set} in 2003, who showed that a
generalization of the game is NP-complete. Our focus on this paper is to
continue and refine this work by studying further aspects of the computational
complexity of SET.

In order to study a game from the viewpoint of computational complexity theory,
one needs to define a natural generalization of the game in question (as the
original constant size game always has constant time and space complexity). In
a round of SET, there are 3 parameters to consider: the number of cards $m$,
the number of attributes $n$ and the number of values $k$ (in the original game
$m=12$, $n=4$ and $k=3$). A subset of $k$ cards will be considered to be a Set
if for all attributes, values either all agree or all differ. Of course these
three parameters are not totally independent as the number of cards $m$ is
upper-bounded by $k^n$. In any multi-round version of the game, an extra
parameter $r$ being the number or rounds is added. 

\subsubsection*{Summary of results.}

We first talk about a single-round version of SET. This one-round version
generalizes \rdm\ as was first observed in \cite{set}. It is easy to see that
the problem parameterized by the number of values $k$ is in XP (by the trivial
algorithm that enumerates all size-$k$ sets of cards and checking whether any
of them constitutes a Set). We prove that this parameterized version of the
problem is W-hard. Our W-hardness proof applies to \rdm\ as well, proving that
\rdm\ parameterized by the size of the dimensions $k$ (while the number of
dimensions $n$ is unbounded) is W[1]-hard. This result may be of independent
interest, as this is a natural parameterization of a classic problem that has
not been considered before. The only relevant parameterized result known about
this problem is that \mrdm\ parameterized by the size of the matching and the
number of dimensions is FPT (first established in \cite{parameterizedcomplexity} 
and further improved in
\cite{ChenFLLW11}.

Next, we focus our attention to the case where the number of values is 3. As 
was suggested, there is a polynomial time algorithm to find whether there
exists at least one Set, in other words to play just one round. The complexity
stays the same even if we consider the question of enumerating all Sets. This
generalizes the daily puzzles found either on the
\href{http://www.setgame.com/set/index.html}{official website of SET} or in
\href{http://www.nytimes.com/ref/crosswords/setpuzzle.html}{the New York
Times}.  In these puzzles we are given $m$ cards and need to find the maximum
number of Sets assuming that we don't remove any cards from the table after
finding a Set.

It becomes interesting to ask the same question for a multi-round game, where
cards are gradually removed. This corresponds to the CO-OP version of the game,
where players have to cooperate in order to find the maximum number of
available Sets given that cards of found Sets are removed from the table.
Another interesting variation is the one where we are looking for the minimum
number of Sets that once picked destroy all existing Sets. Both problems can be
seen as special cases of more general packing and covering problems. In the
maximization version, one is looking for a maximum 3-\pack, while in the
minimization version one is looking for a minimum \eds\ in a 3-uniform
hypergraph. We show that both problems remain NP-Hard even on instances that
correspond to the SET game. From the parameterized point of view, if one
considers as the parameter the number of rounds $r$ to be played, a natural
parameterization of the former problem asking whether there are at least $r$
mutually disjoint Sets is Fixed Parameter Tractable, following from the results
of Chen et al.  \cite{ChenFLLW11}. We establish that the natural parameterized
version of the latter problem (find at most $r$ Sets to destroy all Sets) is
also FPT, through a connection with the related \eds\ problem on graphs. 

Finally, we consider a two-player version of the $r$-round game, which can be
seen as a restriction of the game \arc\ in 3-uniform hypergraphs (where
hyperedges should be valid Sets). The complexity of \arc\ is currently unknown
even on graphs and it has been a long-standing open question since the
PSPACE-Completeness of its sibling problem \node\ was established in
\cite{Schaefer78}. We prove that this multi-round 2-player version of SET is at
least as hard as \arc. Nevertheless, we prove that deciding whether the first
player has a winning strategy in $r$ moves in 2-player SET is FPT parameterized
by $r$. This implies the same result for \arc\ on graphs.

The paper is divided as follows: In section \ref{sec:oneset} we present the
W-hardness of the single-round version of SET. In section \ref{sec:multiround}
we analyze the above-mentioned multi-round variations with $k=3$. In section
\ref{sec:2pset} we analyze the natural turn-based 2-player version. Last, in
section \ref{sec:conclusions} we give some conclusions and open problems.

\section{W-hardness of \oneset\ and \rdm} \label{sec:oneset}

In this section, we talk about a single-round generalization of the game of SET. We are dealt $m$ cards, each with $n$ attributes that can take one of $k$ values and we need to find a set of size $k$. This is the main problem considered by Chaudhuri et al. in \cite{set}. Their main insight is that this problem can be seen as a hypergraph problem. Specifically, one may construct a hypergraph on $n\cdot k$ vertices, each representing an attribute-value pair. Now, cards can be represented as hyperedges, by including in each hyperedge the $k$ values that describe the corresponding card's attributes. It is not hard to see that a perfect matching in this $n$-partite hypergraph corresponds to a Set in the original instance. On the other hand, some Sets do not correspond to perfect matchings, because all cards may share the same value for some attributes. Nevertheless, Chaudhuri et al. have established that the two problems have the same complexity and finding a Set is essentially algorithmically equivalent to find a perfect matching in this hypergraph.   

Here we will exploit this connection between the two problems to analyze the complexity of finding a Set with respect to the three relevant parameters $m,n$, and $k$. If $k$ is unbounded, finding a Set was shown to be NP-hard in \cite{set} even for just 3 attributes. If the cards have only 2 attributes, the game is in P. On the other hand, if $n$ is unbounded but the number of values $k$ is considered as a parameter the problem is trivially in XP. Here we will show that the trivial algorithm cannot be improved to an FPT algorithm, by proving that the problem is W[1]-hard. The first step of our reduction is to show that the relevant parameterization of \rdm\ is W[1]-hard, a result that may be of independent interest.

\begin{theorem} \label{thm:Whard}
\rdm\ parameterized by the dimension size is W[1]-hard.
\end{theorem}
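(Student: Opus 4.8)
The plan is to give a parameterized reduction from \kmc, which is W[1]-complete, to \rdm, producing an instance whose dimension size $k$ is a function of the clique parameter $t$ alone (namely $k = t + t(t-1)/2 = O(t^2)$), while the number of dimensions $n$ and the number of tuples stay polynomial in the size of the input graph $G$. The property I will exploit throughout is that a perfect matching in an instance with $n$ dimensions each of size $k$ selects exactly $k$ tuples that, in every single dimension, take all $k$ values without repetition; equivalently, the chosen tuples are pairwise distinct in every coordinate. This rigidity is the engine of the reduction.

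First I set up the selection layer. I introduce one ``slot'' for each of the $t$ colour classes (a vertex-slot, meant to choose a vertex $v_i \in V_i$) and one slot for each of the $t(t-1)/2$ pairs of classes (an edge-slot, meant to choose an edge between $V_i$ and $V_j$), so the matching will consist of $k = t + t(t-1)/2$ tuples. A single dedicated ``slot dimension'' forces the matching to contain exactly one tuple per slot: every candidate tuple belonging to slot $s$ is given value $s$ in this coordinate, so covering each of the $k$ values exactly once is the same as picking one candidate per slot. To keep the remaining gadgets local I use a home-value convention: in every other dimension each slot has a reserved default value, the defaults forming a permutation, and only the one or two slots that a gadget concerns are allowed to deviate from their defaults; this guarantees that slots which do not interact never collide.

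The heart of the construction, and the step I expect to be the main obstacle, is enforcing consistency: that the edge chosen in slot $\{i,j\}$ is incident to the vertices chosen in vertex-slots $i$ and $j$. The difficulty is that perfect matching natively expresses ``all chosen tuples differ in every coordinate,'' which is the opposite of the ``agree on the same vertex'' constraint I need, and a single coordinate based on value-equality cannot encode ``incompatible iff the endpoints differ.'' My key trick resolves this by complementation: I encode each vertex's index in binary over $O(\log|V_i|)$ bits, and for each bit $r$ I add one dimension in which the vertex-slot-$i$ candidate for $v$ takes value $\mathrm{bit}_r(v)\in\{0,1\}$, the edge-slot-$\{i,j\}$ candidate for an edge with $i$-endpoint $u$ takes value $1-\mathrm{bit}_r(u)$, and all other slots sit at default values $\ge 2$. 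The two active values then cover $\{0,1\}$, so the coordinate can be completed to a permutation, if and only if $\mathrm{bit}_r(v)=\mathrm{bit}_r(u)$; a disagreement forces a repeated value and destroys the matching. Letting $r$ range over all bits forces $v=u$, so the vertex-slots act as hubs through which all edge-slots around a class are made mutually consistent. Since I create an edge-slot candidate only for genuine edges of $G$, a perfect matching exists exactly when the selected vertices are pairwise adjacent.

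It then remains to verify both directions and the parameter bound, which are routine. A multicoloured clique yields a matching by choosing the corresponding vertex- and edge-candidates; conversely the slot dimension extracts one vertex per class and the bit dimensions guarantee that the selected edges are exactly the pairs among these vertices, hence a clique. The number of dimensions is $1+\sum_{\{i,j\}}(\lceil\log|V_i|\rceil+\lceil\log|V_j|\rceil)=O(t^2\log|V|)$ and the number of candidate tuples is $O(|V|+|E|)$, so the reduction runs in polynomial time; crucially the dimension size is $k=t+t(t-1)/2=O(t^2)$, a function of $t$ alone, which is what makes this an fpt-reduction and establishes that \rdm\ parameterized by the dimension size is W[1]-hard.
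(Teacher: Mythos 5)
Your proof is correct, but your construction is genuinely different from the paper's. Both arguments reduce from \kmc\ and both achieve dimension size $\Theta(t^2)$ (one value per colour class plus one per pair of classes, in the paper's language), but the machinery for enforcing vertex--edge consistency diverges. The paper uses a purely combinatorial, unary-style encoding: it creates $n$ dimensions for every \emph{ordered} pair of classes (so $nt(t-1)$ dimensions in total), and the vertex-multiedges and edge-multiedges are designed so that the edge-multiedge for $e=(v_{ic_i},v_{jc_j})$ covers exactly the two values that the vertex-multiedges for $v_{ic_i}$ and $v_{jc_j}$ leave uncovered; consistency is thus enforced by an interlocking covering pattern, with no dedicated selection dimension (the labels themselves force one vertex-multiedge per class and one edge-multiedge per pair). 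You instead add an explicit slot dimension to force one candidate per slot, and then enforce equality of the chosen vertex with the chosen edge's endpoint bit by bit, via the complementation trick: agreement of a bit makes the two active tuples cover $\{0,1\}$, disagreement makes them collide. Your encoding is more modular (selection and equality gadgets are cleanly separated, and the equality gadget is reusable) and uses only $O(t^2\log n)$ dimensions instead of the paper's $O(nt^2)$, though this economy is immaterial for W[1]-hardness since both counts are polynomial; the paper's construction avoids binary indexing entirely and keeps all gadgets symmetric, which is what makes its correctness argument a short ``covered exactly once'' verification. Both are valid fpt-reductions, and your verification of the two directions and of the parameter bound is sound.
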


\begin{proof}
We present a reduction from \kmc\ (proven to be W[1]-hard in \cite{multicoloredclique}).

Given an instance of \kmc, in other words a $k-$partite graph $G(V,E)$ where each part has size $n$, we construct an instance of \rdm, a multigraph $G'(V',E')$ with $nk(k-1)$ dimensions where each dimension has $k+{k\choose 2}$ different values, such that if $G$ has a clique of size $k$ then $G'$ has a multidimensional perfect matching.

For each ordered pair $(V_i,V_j)$ with $V_i, V_j, i\neq j$ being parts of $V$, we add $n$ dimensions which we group together in a group $i-ij$. Each of the $n$ dimensions in each group $i-ij$ of graph $G'$ corresponds to a different vertex in part $V_i$ of graph $G$. Each dimension will have $k+{k\choose 2}$ different possible values, one value corresponding to each part $V_i$ and one value corresponding to each pair of parts $(V_i, V_j), i<j$. 



\begin{figure}[ht]
\centering
\begin{minipage}[b]{0.45\linewidth}
\centering
\includegraphics[scale=1.2]{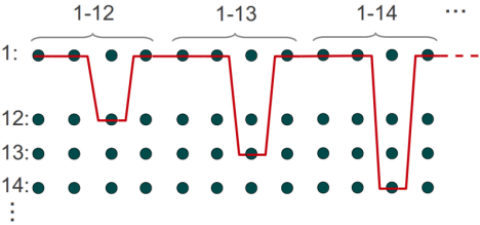}
\caption{The vertex-multiedge of $G'$ that corresponds to vertex $v_{13}$ of part $V_1$ in $G$.}
\label{fig:vertices}
\end{minipage}
\hspace{0.5cm}
\begin{minipage}[b]{0.45\linewidth}
\centering
\includegraphics[scale=1.2]{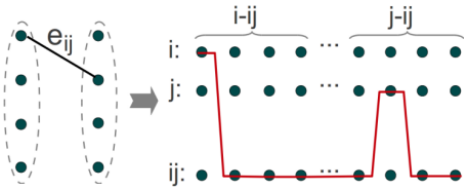}
\caption{The edge-multiedge of $G'$ that corresponds to the edge $e_{ij}$ of $G$.}
\label{fig:edges}
\end{minipage}
\end{figure}

Furthermore, for each vertex $v_{ij}$ in the original graph ($j^{th}$ vertex of part $V_i$) we create a multiedge as follows (see figure \ref{fig:vertices}): it will contain the vertices labeled with $i$ for all dimensions but the $j^{th}$ dimension of each group $i-ki$, where $k\neq i$. For these dimensions we 'll include the vertex labeled with $kj$. We call these \emph{vertex-multiedges}. 

Last, for each edge $e_{ij}\in E$ that connects the $a^{th}$ vertex of part $V_i$ with the $b^{th}$ vertex of part $V_j$ in the original graph, we create a multiedge as follows (see figure \ref{fig:edges}): we add all vertices labeled with $ij$ for all dimensions except for the $a^{th}$ dimension in the group $i-ij$ that take the vertex with label $i$ and the $b^{th}$ dimension in group $j-ij$ that we take the vertex with label $j$. We call these \emph{edge-multiedges}.

Notice that the above construction is polynomial in the size of the input and the parameter of \kmc. Also, the dimension size in the constructed instance of \rdm\ $k+{k \choose 2}$ is quadratic in the parameter $k$ of \kmc.

\begin{figure}[h]
\centering
\includegraphics[scale=1.4]{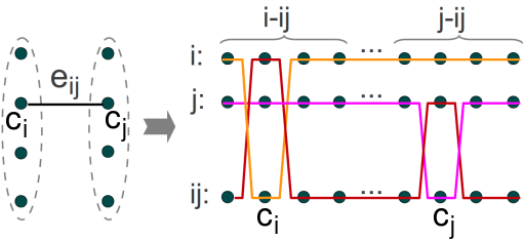}
\caption{Vertices of groups $i-ij$ and $j-ij$ that were not covered by the vertex-multiedges of $G'$ that correspond to vertices $v_{ic_i}$ or $v_{jc_j}$ of $G$ are covered by the edge-multiedge of $G'$ that corresponds to edge $e_{ij} = (v_{ic_i}, v_{jc_j})$ and vice versa.} 
\label{fig:reduction}
\end{figure}

Now we prove that if $G$ has a clique of size $k$ then $G'$ has a perfect multidimensional matching and vice versa. Suppose that $G$ has a clique of size $k$. In other words, there should be a tuple $(v_{1c_1}, v_{2c_2}, \ldots v_{nc_n})$, with $v_{ic_i} \in V_i$, where all vertices in the tuple are connected with each other. We select in the matching the $k$ vertex-multiedges of $G'$ that correspond to the vertices in the clique of $G$ and the $k\choose 2$ edge-multiedges of $G'$ that correspond to edges of $G$ that connect vertices in the clique. This selection is a perfect matching: each vertex-multiedge or edge-multiedge selects all vertices with labels that correspond to the vertex or edge that they represent, except for $k-1$ vertices for each vertex-multiedge and 2 vertices for each edge-multiedge as it is described above. Also, the edge-multiedge of $G'$ that corresponds to edge $e_{ij} = (v_{ic_i}, v_{jc_j})$ of $G$ covers those two vertices that the vertex-multiedges that correspond to $v_{ic_i}$ and $v_{jc_j}$ left uncovered, and vice versa (see figure \ref{fig:reduction}). 

On the other hand, if $G'$ has a perfect matching, then this matching contains exactly one vertex-multiedge and exactly one edge-multiedge of each value (otherwise there would be uncovered vertices or vertices covered twice by the matching). We select all vertices of $G$ that correspond to a vertex-multiedge in the matching. Now, all these vertices that we picked should be pairwise connected in $G$, because the edge-multiedges in the matching should be covering those vertices in $G'$ that the vertex-multiedges didn't cover, which correspond to the vertices in the clique.

For a complete example of the construction see figure \ref{fig:example}. \qed

\begin{figure}[h] 
\centering
\includegraphics[scale=1.4]{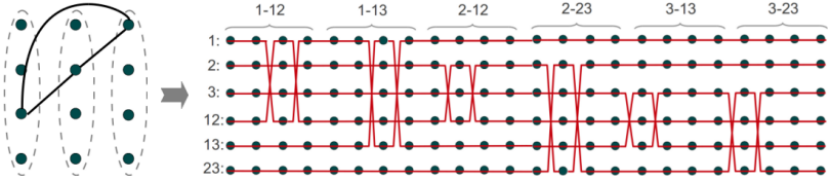}
\caption{A complete example for W-hardness of Section \ref{sec:oneset}.}
\label{fig:example}
\end{figure}

\end{proof}

\begin{corollary} 
The game of Set parameterized by the number of values (or else the size of the Sets) is W[1]-hard.
\end{corollary}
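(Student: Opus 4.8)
The plan is to obtain this corollary directly from Theorem~\ref{thm:Whard}, using the equivalence between the single-round game of Set (\oneset) and \rdm\ observed by Chaudhuri et al.~\cite{set} and recalled at the start of this section. Since Theorem~\ref{thm:Whard} already establishes that \rdm\ is W[1]-hard when parameterized by the dimension size, it suffices to exhibit an FPT reduction from \rdm\ (under this parameterization) to the game of Set that preserves the parameter; the hardness then transfers. Because the game of Set \emph{generalizes} \rdm, this reduction is essentially an embedding, and the whole argument is short once the parameter correspondence is pinned down.

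Concretely, I would reverse the hypergraph encoding described above. Given an \rdm\ instance consisting of an $n$-partite hypergraph whose every part (dimension) has exactly $k$ vertices, introduce one attribute per dimension and one value per vertex, and turn each hyperedge into a card whose value on attribute $i$ is the vertex the hyperedge selects in part $i$. Under this correspondence a perfect matching --- that is, $k$ hyperedges covering all $nk$ vertices exactly once --- maps precisely to a collection of $k$ cards realizing a rainbow (the all-differ rule) on every attribute, i.e.\ a valid Set, and conversely every rainbow Set is a perfect matching. The crucial observation for the parameter is that the number of values in the produced Set instance equals the dimension size $k$ of the \rdm\ instance, and a Set has size exactly $k$; hence the parameter (number of values $=$ size of the sought Set) is preserved identically, and the reduction is computable in polynomial time. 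In particular, composing with the reduction of Theorem~\ref{thm:Whard}, whose dimension size is $k+{k \choose 2}$, keeps the parameter a function of the \kmc\ parameter alone.

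The one point requiring care --- and the step I expect to be the main obstacle --- is that a Set may legitimately use the all-agree rule on some attributes, not only the all-differ (rainbow) rule. A priori the Set instance could therefore contain a Set in which some attribute is monochromatic, and such a Set need not correspond to a perfect matching, which would threaten the ``only if'' direction of the equivalence. This is exactly the discrepancy between Sets and perfect matchings flagged earlier in the section, and it is precisely what Chaudhuri et al.~\cite{set} resolve: they show that finding a Set is algorithmically equivalent to finding a perfect matching in the associated hypergraph, so that the presence of agreement-based Sets does not change the answer. I would therefore invoke their equivalence to rule out such spurious Sets; since that equivalence only manipulates attributes (dimensions) while leaving the number of values bounded by a function of $k$, the parameter remains under control. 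Combined with the parameter-preservation established above, this yields the W[1]-hardness of the game of Set parameterized by the number of values, equivalently by the size of the Sets.
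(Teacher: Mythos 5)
Your high-level plan is the same as the paper's: reuse the reduction of Theorem~\ref{thm:Whard}, reinterpret the constructed \rdm\ instance as a Set instance (dimensions become attributes, multiedges become cards), check that the parameter $k+{k\choose 2}$ is preserved, and note that a perfect matching is a rainbow Set, which settles the ``if'' direction. You also correctly isolate the one real difficulty: Sets that use the all-agree rule on some attribute need not be perfect matchings. The gap is in how you discharge that difficulty. The paper does it by a direct structural argument about the \emph{specific} instance $G'$ built in Theorem~\ref{thm:Whard}: no vertex-multiedge can pass through a value that belongs to another vertex-multiedge, so any Set in $G'$ is forced to be rainbow in every dimension and is therefore a perfect matching. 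You instead invoke the ``equivalence'' of Chaudhuri et al.~\cite{set} as a black box, claiming it shows that ``the presence of agreement-based Sets does not change the answer.'' That is not what their equivalence provides, and read as a statement about the naive encoding of an \emph{arbitrary} \rdm\ instance it is false: with $k=3$, the three cards $(1,1,1)$, $(1,2,2)$, $(1,3,3)$ form a Set (all-agree on the first attribute, rainbow on the other two), yet the corresponding hyperedges pairwise intersect, so the hypergraph has no perfect matching. Answer preservation therefore cannot be inherited from a generic equivalence between the two problems; it has to be verified for the concrete instance produced by the reduction, which is exactly what the paper's one-line argument does.

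A second, related problem is your parameter claim. The Chaudhuri et al.\ equivalence was established in the classical (NP-completeness) setting; whatever instance transformations it uses, you would need to check that they form an FPT reduction with respect to the number of values $k$. You assert this (``only manipulates attributes \ldots the parameter remains under control'') without argument, and it is not innocuous: the natural attempts to kill agree-type Sets by adding attributes that separate intersecting hyperedges tend to also destroy the Set-ness of genuine matchings, since a matching containing exactly one of two separated cards is neither all-agree nor rainbow on the new attribute. So the proposal has a genuine hole precisely at the step you flagged as the main obstacle; the repair is to drop the black-box citation and argue directly, from the structure of the vertex- and edge-multiedges of $G'$, that every Set in $G'$ is a multidimensional perfect matching.
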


\begin{proof}
The ``if'' part of the above reduction also holds for the game of Set: if $G'$ has a multidimensional perfect matching it also has a Set. For the ``only if'' part, notice that if $G'$ has a Set then this Set is also a multidimensional perfect matching since no vertex-multiedge can pass through a value that belongs to another vertex-multiedge. \qed
\end{proof}

\section{Multi-round variations of SET}\label{sec:multiround}

In this and the next section we talk about multi-round variations of SET where
the number of values (or in other words the size of the Sets) is 3. 
In this case, each card (vertex of the hypergraph) is described by a 
vector in $\mathbb{F}_3^{n}$. Note that, three cards form a Set if and only 
if their corresponding vectors add up to the all-$0$ vector. It is also easy to
observe that every pair of cards can have up to one card that forms a Set with
the other two. This property will prove useful later.

We will once again use a hypergraph formulation, though
different from the one in the previous section. Specifically, we consider the
$3$-uniform hypergraph formed if we construct a vertex for each dealt card and
a hyperedge (that is, a set of size 3) for each Set. It is clear that given a
SET instance, one can in polynomial time construct this hypergraph. 

We will first talk about a maximization variation: given a set of cards we ask the question
whether there exist at least $r$ Sets that we can pick up before leaving no
Sets on the table. We call this problem \maxset. Observe that this problem is a
special case of 3-\pack, which is a known NP-hard problem. We thus need to show
that the problem remains NP-hard when restricted to instances realizable by SET
cards. This is established in Theorem \ref{thm:maxset}.

Then, we turn our attention to a minimization version: given a set of cards, is
it possible by removing at most $r$ Sets ($3r$ cards) to eliminate all potential Sets?
We call this problem \minset. This problem is a special case of \eds\ in 3-uniform hypergraphs. We show its
NP-hardness even when restricted to hypergraphs realizable by SET cards. Then, we prove that 
the natural parameterized version of \eds\ in 3-uniform hypergraphs with parameter $r$ is FPT, 
thus proving that the special case of a parameterization of this version of SET
is also FPT.

\subsection{NP-Hardness the maximization version}\label{sec:maxset}

\begin{theorem} \label{thm:maxset} 
\maxset\ is NP-Hard.  
\end{theorem}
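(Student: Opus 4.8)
The plan is to establish NP-hardness by a polynomial-time reduction from a structured special case of $3$-\pack, namely maximum $3$-dimensional matching (the $k=3$ case of \mrdm), whose decision version is NP-complete. Membership of \maxset\ in NP is immediate, since a witness is simply a collection of $r$ pairwise card-disjoint triples, each of which is checked to be a valid Set in polynomial time. The content of the theorem is therefore entirely in the hardness direction: we must show that the NP-hardness of general $3$-\pack\ survives the restriction to those $3$-uniform hypergraphs that are \emph{realizable} by SET cards, i.e. hypergraphs whose vertices can be placed at points of $\mathbb{F}_3^{n}$ so that the hyperedges are exactly the zero-sum (collinear) triples. Recall from the discussion above that any such hypergraph is necessarily \emph{linear}: every pair of cards lies in at most one Set. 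This both guides the choice of source instance and constrains what we are allowed to build.

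Given a tripartite instance with classes $X,Y,Z$ and triple set $T\subseteq X\times Y\times Z$, I would represent every element by a single card and design the attribute vectors in three blocks of coordinates. A \emph{type} block (a single attribute taking value $0$, $1$, $2$ on $X$, $Y$, $Z$ respectively) forces a zero-sum triple to be either one-of-each-class or all-of-one-class, since a $2$-$1$ split can never sum to $0$ in $\mathbb{F}_3$. A \emph{separation} block, assigning each class its own cap-like coordinates so that no three cards of the same class sum to zero, kills the all-of-one-class triples. Finally a \emph{content} block encodes element identities so that a one-of-each-class triple $(x,y,z)$ is a Set precisely when $(x,y,z)\in T$; the natural attempt is to devote coordinates to the triples of $T$ and arrange the marks of $x,y,z$ to cancel exactly along the coordinate of a genuine triple. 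The construction is clearly polynomial in $|X|+|Y|+|Z|+|T|$, and the number of values stays at $3$, so the output is a bona fide SET instance.

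For correctness I would prove a bijection between matchings of the $3$DM instance and packings of Sets in the constructed instance: the three element-cards of a triple $t\in T$ form a Set, two such Sets are card-disjoint exactly when the corresponding triples share no element, so the maximum number of disjoint Sets equals the maximum matching size, and ``at least $r$ Sets'' translates directly across the reduction. The main obstacle, and where essentially all the care lies, is the realizability analysis of the content block: because an element shared by several triples has a single fixed vector, the cancellation pattern that makes its genuine triples zero-sum can \emph{also} force a spurious zero-sum on some non-triple, and one must rule out all such collateral Sets (across triples sharing an element, as well as any residual same-class or partial-overlap triples not already excluded by the first two blocks). I expect to handle this either by adding a consistency sub-block of coordinates that give each element a unique generic signature surviving into every triple it participates in, or---if general $3$DM turns out not to be faithfully realizable---by reducing instead from a suitably restricted (e.g. bounded-occurrence or already-linear) variant whose incidence structure provably admits a spurious-Set-free placement in $\mathbb{F}_3^{n}$.
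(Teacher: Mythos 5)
There is a genuine gap, and it sits exactly where you placed your caveat: the ``realizability analysis of the content block'' is not a technical detail to be deferred---it is the entire content of the theorem, and the construction you sketch demonstrably fails to provide it. The fundamental obstruction is that in a SET instance you do not get to \emph{choose} which triples are hyperedges: once the cards are placed in $\mathbb{F}_3^{n}$, \emph{every} zero-sum triple is automatically a Set. Since a Set $(x,y,z)$ forces $v(z)=-v(x)-v(y)$, the vector of each element is completely determined by the other two; consequently the triple system $T$ must satisfy strong algebraic closure properties, not merely linearity. General $3$DM instances already violate linearity (two triples may share two elements, impossible in SET), and even linear, bounded-occurrence instances violate the algebraic constraints, so no per-element, one-card encoding of an unrestricted (or mildly restricted) $3$DM instance can make the Sets coincide exactly with $T$. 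Your specific ``natural attempt'' also fails at the level of genuine triples, not just spurious ones: if a coordinate is devoted to a triple $t$ and marks its members so that they cancel while distinguishing them from non-members (say members get $0$, non-members get $1$), then any \emph{other} genuine triple $t'$ sharing one element with $t$ sums to $0+1+1=2\neq 0$ in that coordinate, so $t'$ is destroyed. This is unavoidable in a block-by-block design unless the cards within each ``conflict region'' are given \emph{identical} values on that block (so it cancels trivially as $1+1+1=0$), which is a fundamentally different design principle from the one you propose. Your two suggested repairs---a ``consistency sub-block'' or retreating to an ``already-linear'' variant of $3$DM---are respectively unspecified and insufficient (linearity alone does not imply $\mathbb{F}_3$-realizability), and proving NP-hardness of a variant whose instances \emph{are} all realizable is precisely the problem being posed.

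For comparison, the paper's proof reduces from 3-SAT, first normalizing the formula so that every variable occurs exactly three times and any two clauses share at most one variable (this bounded sharing is what ultimately enforces linearity), and then builds variable and clause gadgets whose cards are explicit vectors in $\mathbb{F}_3^{m+n+1}$ split into a variable part and a clause part. The key design trick is the one noted above: all cards of a variable gadget share the same variable part and all cards of a clause gadget share the same clause part, so that block cancels identically within a gadget and the other block alone governs which triples are Sets. The bulk of the proof (three lemmas) is then a case analysis verifying that \emph{no} unintended zero-sum triples exist---i.e., the realizability verification you postponed. Your packing-counting argument (disjoint Sets $\leftrightarrow$ disjoint triples, threshold $r$ transfers) is fine as far as it goes, but without a concrete, spurious-Set-free placement it rests on an unproven and, in the form stated, false premise.
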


\begin{proof}

We design a reduction from 3-SAT. Given a formula $\phi$ of 3-SAT we first
create an equivalent formula $\phi$' where each clause contains at most 3
literals and each variable appears exactly 3 times (two as positive and one as
negative or two as negative and one as positive). Furthermore, any two clauses
of $\phi$' share at most one variable. A similar construction appears in
\cite{papad}, but it is also presented below for the sake of completeness. 

\begin{lemma} \label{specialSAT}
Any formula $\phi$ of regular \emph{3-SAT} can be transformed into an equivalent formula $\phi$', where each clause has at most 3 variables and each variable appears exactly 3 times in $\phi$' (not all positive or all negative).
\end{lemma}

\begin{proof}
Given a formula $\phi$ of regular 3-SAT, we create an equivalent formula $\phi$' as follows: first, we ensure that each variable appears at least 4 times (if not, we double some of the clauses where this variable appears); then, for each appearance of each variable $v$ we create a new variable $v_i$ for $i=1, \ldots l$, where $l$ is the total number of appearances, and clauses $(\neg v_i \vee v_{i+1})$, $(\neg v_l \vee v_1)$.

Clearly all variables in $\phi$' appear exactly 3 times and not all positive or all negative. Furthermore, $\phi$ is satisfiable iff $\phi$' is satisfiable by an assignment that sets the same truth value to all variables $v_i$ in $\phi$' corresponding to the same variable $v$ in $\phi$. \qed

\end{proof}

Let $m$ be the number of clauses of $\phi$' and $n$ the number of
variables.

The main idea of the reduction is as follows: from formula $\phi$' we create an
instance of \maxset\ which consists of variable gadgets (one corresponding to
each variable) and clause gadgets (one corresponding to each clause). The
variable gadget of a variable $x$ contains five cards: three cards $x_{1}$,
$x_{2}$ and $x_{3}$ for each appearance of $x$ in $\phi$' ($x_{1}$ and $x_{2}$
corresponding to appearances with the same sign and $x_3$ to opposite), and two
more cards: $x_{12}$ which forms a Set with $x_1$ and $x_2$, and $x_{123}$
which forms a Set with $x_{3}$ and $x_{12}$. Picking either Set is equivalent
to making an assignment to $x$ (both Sets contain $x_{12}$, only one Set can be
formed leaving either positive or negative appearances of $x$ unused). The
cards $x_1,x_2,x_3$ will also appear in the clause gadgets and, intuitively, we
will be able to select a Set from a clause gadget if and only if one of its
$x_i$ vertices is free, corresponding to a true literal.

\begin{figure}[ht]
\begin{minipage}[b]{0.45\linewidth}
\centering
\includegraphics[scale=0.8]{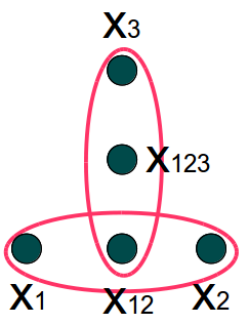}
\caption{The variable gadget}
\label{fig:variables}
\end{minipage}
\hspace{0.5cm}
\begin{minipage}[b]{0.45\linewidth}
\centering
\includegraphics[scale=0.8]{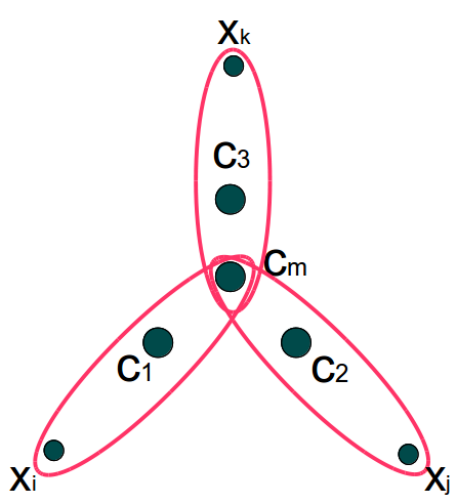}
\caption{The clause gadget}
\label{fig:clauses}
\end{minipage}
\end{figure}

The clause gadget consists of four additional cards: one card per literal in
the clause $c_1$, $c_2$, and $c_3$, and one additional card $c_m$ (for clauses
of size 2 we do not introduce $c_3$).  Furthermore, each card $x_{c_i}$
corresponding to the literal in the $i^{th}$ position of a clause $c$ forms a
Set with cards $c_i$ and $c_m$. In order to be able to pick this Set (and
satisfy $c$) $x_{c_i}$ should not have been picked during the assignment phase. 


Observe that, if one sees the
new instance as a 3-\pack\ instance, it is not hard to establish that the
instance has a solution of size $n+m$ if and only if $\phi$' is satisfiable.
The bonus point is that this instance is realizable with Set cards. In 
what follows we focus our attention to proving this fact. \qed

Each card will be described by a vector in $\mathbb{F}_3^{m+n+1}$. 
The first $n+1$ coordinates constitute
the variable part and the last $m$ the clause part. The variable part is the
same for all cards in each variable gadget representing variable $i$: it
consists of all $0$s, except for the $i^{th}$ coordinate which is set to $1$.
Similarly, vectors of clause gadgets have the same clause part: again all $0$s,
except the ${(n+1+j)}^{th}$ coordinate is set to $1$ for the $j^{th}$ clause.  We
have now fully specified the vectors for the $x_i$'s. Let us explain how the
remaining vectors are filled out.

\begin{itemize}

\item $x_{12}$: clause part is equal to the clause part of $-x_1 - x_2$, so
that $x_1+x_2+x_{12} = 0^m \bmod{3}$;

\item $x_{123}$: clause part is equal to clause part of $- x_3 - x_{12}$; 

\item $c_{m}$: variable part is equal to variable part of $x_{c_1} + x_{c_2} +
x_{c_3}$, if they exist.  If clause has only two literals, we only use
$x_{c_1}+x_{c_2}$ for the first $n$ coordinates while coordinate $n+1$ is set
to 1. The intuition behind introducing the dummy 1 at position $n+1$ for
clauses of size 2 is that it will be convenient if we always know that the
variable part of $c_m$ has three $1$'s.

\item $c_{1}$: variable part is equal to  variable part of $x_{c_1} - x_{c_2} -
x_{c_3}$ ($c_{2}$, $c_{3}$ are formed accordingly). Again, if $x_{c_3}$ does
not exist we use $x_{c_1} - x_{c_2}$ and set coordinate $n+1$ to 2. 

\end{itemize} 

For a detailed presentation of the values of the different types of cards see
table \ref{table}.

\begin{table}
\centering
\begin{tabular}{|c|c|c|}
\hline
\textbf{Card} & \textbf{Variable Part} & \textbf{Clause Part} \\
\hline
$x_i$ & (0, 0, \ldots 0, 1, 0, 0 \ldots 0) & (0, 0, \ldots 0, 1, 0, 0 \ldots 0) \\
$x_{12}$ & (0, 0, \ldots 0, 1, 0, 0 \ldots 0) & (0, 0, \ldots 0, 2, 0, 0 \ldots 0 2, 0 \ldots 0) \\
$x_{123}$ & (0, 0, \ldots 0, 1, 0, 0 \ldots 0) & (0, 1, 0, \ldots 0, 2, 0, 0 \ldots 0 1, 0 \ldots 0) \\
$c_{i}$ & (0, 1, \ldots 0, 2, 0, 0 \ldots 0 2, 0 \ldots 0) &(0, 0, \ldots 0, 1, 0, 0 \ldots 0) \\
$c_m$ & (0, 1, \ldots 0, 1, 0, 0 \ldots 0 1, 0 \ldots 0) & (0, 0, \ldots 0, 1, 0, 0 \ldots 0)\\
\hline
\end{tabular}
\caption{A synopsis of all possible tuples of the different types of card values for proof of Theorem \ref{thm:maxset}.}
\label{table}
\end{table}

Now, we prove that the only Sets which are formed are indeed the Sets that we described
in the introduction of Section \ref{sec:maxset}. To achieve this we need to prove the following 3 Lemmata:

\begin{lemma}\label{lem:same} 
Cards of formed Sets share either the same variable part or the same clause part.
\end{lemma}

\begin{proof} 

First, observe that if two vectors agree in either the clause or the variable
part then the third vector should also agree with them. Therefore, we will only
consider Sets that contain a card of type $c_i$ or $c_m$, because in a Set
containing only cards from the variable gadgets, their vectors should agree
on the variable part.

Suppose that there exists a Set where the 3 cards share neither their variable
part nor their clause part. Since a card of type $c_i$ (or $c_m$) is part of
this Set, then a card of type $c_m$ (or $c_i$ accordingly) should also be part
of it (each of these two cards has three non-zero values in their variable part
and there is no other way to match them with two other cards from variable
gadgets which have only one non-zero value). So this Set should contain a card
of type $c_i$ and a card of type $c_m$. 

Since the two cards we have ($c_i$ and $c_m$) do not agree on their clause
part, the third card of a Set must have exactly two coordinates set to 2 in its
clause part, and all others to 0. Therefore, it must be of type $x_{12}$. The
two 2s of card $x_{12}$ should be aligned with the 1s from $c_i$ and $c'_m$,
when $c$ and $c'$ are different clauses.  But for variable parts to agree,
non-zero values in cards $c_i$ and $c'_m$ should be aligned, which means that
clauses $c$ and $c'$ should contain identical variables.  However that is not
possible from the construction of $\phi$' where different clauses share no more
than one common variable. \qed

\end{proof}

\begin{lemma} \label{lem:variable} 
Only two different types of Sets are formed
by cards that share the same variable part and they intersect.  
\end{lemma}

\begin{proof} 

By construction, there are two different Sets formed within a variable gadget
as shown in figure \ref{fig:variables}. Furthermore, each pair of cards $a$,
$b$ has a unique third card $-(a+b) \bmod 3$ with which they form a Set. Only
possible triplet where cards are pairwise not in participation of existing Sets
are cards $x_1$ (or equivalently $x_2$), $x_3$, and $x_{123}$ which can't form
a Set. \qed
\end{proof}

\begin{lemma} \label{lem:clause}
Sets of cards that share the same clause part shall contain a card of type $c_m$.
\end{lemma}

\begin{proof} Cards of the same clause type are $x_i$, $c_i$ and $c_m$. A card of type $c_i$ can't exist alone with two cards of type $x_i$ because its variable part has three non-zero values and can't match with two cards where each of them has only one non-zero value. Trying to put two cards of type $c_i$ in the same Set won't work either: at least one pair of 2s should be aligned, which means that the last card should also have a 2 in that position. This only leaves a third card of type $c_i$ as a possibility (no other type has a 2 in the variable part). The only way three cards of this type could potential match is if all non-zero values are matched, which would produce three identical cards. \qed
\end{proof}

Observe now that if $\phi$' is satisfiable, then we can select one Set from
each variable gadget (using the corresponding variable's assignment) and one
Set from each clause gadget (since one of the literals is set to True). This
gives $n+m$ Sets. For the converse direction, observe that, from Lemmata
\ref{lem:variable} and \ref{lem:clause} it is not possible to select more than
one Set from each gadget. Thus, one can extract a satisfying assignment for
$\phi$' from a solution of size $n+m$. \qed

\end{proof}

\subsection{Results on the minimization version}

Next, we present yet another multi-round version of SET, \minset. 
We remind the reader that in this problem a single player is trying to remove 
the smallest possible number of Sets so
that no more Sets are left on the table. Each card, as before, has an unbounded number of
attributes and each attribute can take 3 values.

We prove that \minset\ is NP-hard via a simple reduction from \eds\ (proven NP-hard in \cite{garey}).

\begin{theorem} \label{thm:minset}

\minset\ is NP-hard.

\end{theorem}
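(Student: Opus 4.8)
The plan is to reduce from \eds\ on graphs, which is stated to be NP-hard in \cite{garey}. Given a graph $G(V,E)$, the goal is to construct a collection of SET cards (i.e., vectors in $\mathbb{F}_3^N$ for some $N$) whose induced $3$-uniform hypergraph of valid Sets encodes the edge structure of $G$, so that a minimum \eds\ in $G$ corresponds to a minimum \minset\ solution. Recall that an \eds\ is an independent (pairwise disjoint) collection of edges that dominates all edges of $G$; in the \minset\ picture, the analogue is a collection of pairwise disjoint Sets such that every remaining Set shares a card with one we have removed.

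The first step is to realize each edge of $G$ as a valid Set. The natural idea is to create one card per vertex and one additional ``connector'' card per edge, so that for each edge $\{u,v\}\in E$ the three cards $u$, $v$, and $e_{uv}$ form a Set; this requires setting the vector of $e_{uv}$ so that $u + v + e_{uv} = 0 \bmod 3$. To make the vertex cards distinguishable and to prevent accidental Sets, I would give each vertex card a private coordinate set to a nonzero value (mirroring the one-hot encodings used in Table~\ref{table}). The crucial correctness step, analogous to Lemmata~\ref{lem:same}--\ref{lem:clause} in the previous proof, is to verify that the \emph{only} Sets formed are exactly the edge-triples $\{u,v,e_{uv}\}$: no three vertex cards, no vertex-connector pairs with a wrong connector, and no three connector cards should sum to zero. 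Once that is established, removing a Set $\{u,v,e_{uv}\}$ ``uses up'' vertices $u$ and $v$, and a remaining edge-Set $\{u',v',e_{u'v'}\}$ is destroyed precisely when $\{u',v'\}\cap\{u,v\}\neq\emptyset$, i.e. when the corresponding edges are adjacent in $G$. The disjointness of the chosen Sets corresponds to independence of the chosen edges (the connector cards $e_{uv}$ are automatically private to their edge).

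The main obstacle, exactly as in Theorem~\ref{thm:maxset}, will be the \emph{realizability} argument: arranging the coordinate assignments so that the arithmetic over $\mathbb{F}_3$ produces no spurious Sets beyond the intended edge-triples. I expect to need a small number of auxiliary coordinates (a block indexed by vertices and a block indexed by edges) together with a careful count of nonzero entries, so that any zero-sum triple is forced to consist of two vertex cards and their unique connector. A subtlety to watch is the degree-three property of the game: every pair of cards has at most one card completing it to a Set, so I must ensure that the unique completion of a vertex pair $\{u,v\}$ is a genuine connector card only when $\{u,v\}\in E$ (and otherwise lands on no card in the instance).

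Finally I would wrap up with the equivalence. A set of pairwise disjoint Sets that destroys all Sets corresponds to an independent edge set that dominates all edges; conversely an \eds\ of $G$ yields disjoint edge-Sets covering all remaining Sets. Hence $G$ has an \eds\ of size at most $r$ if and only if the constructed SET instance admits a \minset\ solution of size at most $r$, and since the reduction is clearly polynomial, NP-hardness of \minset\ follows.
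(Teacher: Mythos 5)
Your proposal follows essentially the same route as the paper: a reduction from \eds\ with one-hot vertex cards in $\mathbb{F}_3^{|V|}$ and a connector card per edge equal to $-(u+v) \bmod 3$ (which is exactly the paper's card with 2s at coordinates $u$ and $v$), followed by the check that edge-triples are the only Sets and that disjoint dominating Sets correspond to independent dominating edges. The auxiliary edge-indexed coordinate block you anticipate needing turns out to be unnecessary (and in fact a private nonzero coordinate on a connector card would destroy its intended Set); the $|V|$-dimensional construction already admits no spurious Sets, which one verifies by a short case analysis on how many connector cards a zero-sum triple can contain.
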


\begin{proof}

Given an instance of \eds\ (a graph $G(V,E)$ and a number $r$), we create an
instance of \minset\ of $|V|+|E|$ cards with $|V|$ dimensions each, such that
if $G$ has an edge dominating set of size at most $r$ then there exist at most
$r$ Sets which once picked up destroy all other Sets. Again, cards will be
represented by vectors in $\mathbb{F}_3^{|V|}$.

The construction is as follows: For each vertex $i\in V$ we create a card where
all coordinates are 0 except from the value of the $i^{th}$ coordinate which is
equal to 1.  Furthermore, for each edge $(i,j) \in E$ we create a card where
all coordinates are 0 except from the values of coordinates $i$ and $j$ which
are equal to 2. 

Observe that the only Sets formed correspond directly to edges in $G$. Picking
a Set corresponding to edge $(i,j)$ eliminates the cards corresponding to
vertices $i$, $j$ (together with the card corresponding to edge $(i,j)$). This
move causes the elimination of any potential Set containing cards corresponding
to vertices $i$ and $j$. Thus an edge dominating set of size at most $r$ in $G$
corresponds to an equal number of Sets overlapping all other Sets. On the other
hand the smallest number of Sets that overlap all other Sets is equal to the
minimum edge dominating set. \qed

\end{proof}

Since the \minset\ problem is hard, it makes sense to consider its naturally
parameterized version: Given an arbitrary set of cards, do there exist $r$ Sets
that overlap all other formed Sets? We show that a simple FTP algorithm
can decide this question. As a matter of fact, the algorithm works on any 3-uniform 
hypergraph. Recall that the similar
parameterization of the maximization problem is also known
to be FPT, by relevant results on 3-\pack\ \cite{ChenFLLW11}.  

\begin{theorem} \label{thm:edsfpt}
\eds\ in 3-uniform hypergraphs parameterized by the size of the edge dominating 
set is FPT.
\end{theorem}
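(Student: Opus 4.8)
The plan is to exploit the fact that a small solution has a small \emph{vertex footprint}, enumerate that footprint by bounded branching, and then reconstruct the actual dominating matching from it. Write $V(D)=\bigcup_{e\in D}e$ for a set $D$ of hyperedges. If $D$ is a solution of size at most $r$ — at most $r$ pairwise disjoint hyperedges meeting every hyperedge — then $|V(D)|=3|D|\le 3r$, and $V(D)$ is a \emph{transversal}: every hyperedge contains a vertex of $V(D)$, since domination means sharing a vertex and that shared vertex lies in $V(D)$. Conversely, any matching $D'$ whose footprint contains a transversal $S$ already dominates every hyperedge, because each hyperedge meets $S\subseteq V(D')$. Hence it suffices to (i) find a transversal $S$ of size at most $3r$ contained in the footprint of some hypothetical solution, and (ii) decide whether $S$ can be covered by a matching of at most $r$ hyperedges.

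For step (i) I would run a standard bounded search tree. Maintain a partial vertex set $S_{\mathrm{cur}}$; while some hyperedge $\{a,b,c\}$ is disjoint from $S_{\mathrm{cur}}$, the footprint of the target solution must contain one of $a,b,c$, so branch into three cases, adding the chosen vertex to $S_{\mathrm{cur}}$, and prune any branch with $|S_{\mathrm{cur}}|>3r$. The tree has depth at most $3r$ and branching factor $3$, yielding at most $3^{3r}$ leaves, each a transversal $S$ of size at most $3r$ with $S\subseteq V(D)$ along the correct branch. This is precisely the vertex-cover-enumeration phase of the FPT algorithm for \eds\ on graphs, adapted to $3$-uniform hyperedges.

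Step (ii) is where the hypergraph setting is genuinely harder than the graph case, and is the main obstacle. Given a transversal $S$ with $|S|\le 3r$, I must decide whether there is a matching of at most $r$ hyperedges covering $S$. I would branch over all partitions of $S$ into at most $r$ blocks of size $1$, $2$, or $3$; the number of such partitions is bounded by the Bell number $B_{3r}$, a function of $r$ alone. Each block $g$ prescribes the intersection $e\cap S$ of one chosen hyperedge: blocks of size $3$ require a hyperedge equal to $g$, while blocks of size $2$ and $1$ require a hyperedge extending $g$ by one or two vertices \emph{outside} $S$. The delicate point is that these outside vertices must be pairwise disjoint across all chosen hyperedges; testing whether a compatible choice exists reduces to a system-of-distinct-representatives / bipartite-matching feasibility check, complicated by the size-$1$ blocks, which each need two private outside vertices forming an actual hyperedge. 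I expect this disjointness test to be the crux, and would resolve it by a flow or matching formulation solvable in polynomial time for each fixed partition.

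Combining the phases, the total running time is $3^{3r}\cdot B_{3r}\cdot\mathrm{poly}(n)=f(r)\cdot\mathrm{poly}(n)$, establishing fixed-parameter tractability; correctness follows from the two directions noted in the first paragraph, and the algorithm restricts immediately to instances realizable by SET cards, giving the FPT result for parameterized \minset. If independence of the chosen Sets is dropped, step (ii) simplifies to covering $S$ by at most $r$ hyperedges, an even easier bounded-universe set-cover computation.
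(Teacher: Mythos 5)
Your two-phase skeleton matches the paper's: a bounded search tree enumerating small transversals (your branching factor $3$ on single vertices versus the paper's $7$ on non-empty subsets of an uncovered hyperedge is immaterial), followed by a test of whether the transversal can be realized by a matching of at most $r$ hyperedges. The divergence --- and the gap --- is in the second phase. The paper only keeps hitting sets of size \emph{exactly} $3r$, i.e.\ candidate footprints of the entire solution, so its second phase merely asks whether the subhypergraph \emph{induced} on a $3r$-vertex set has a perfect matching: a check over a universe of $3r$ vertices, doable in time depending on $r$ alone. You instead keep transversals of size at most $3r$ and allow the dominating matching to reach \emph{outside} the transversal, which forces you to solve the disjoint-extension problem you yourself flag as the crux --- and your proposed resolution (``a flow or matching formulation solvable in polynomial time'') fails for the size-$1$ blocks. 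Choosing, for each of several centers $a_i$, a pair of outside vertices $\{x,y\}$ with $\{a_i,x,y\}$ a hyperedge, so that all chosen pairs are pairwise disjoint, is a 3-dimensional-matching-type problem: with an unbounded number of centers it is NP-hard (take the centers to be one part of a tripartite 3-DM instance; $n$ disjoint triples covering that part are exactly a perfect matching). Hence no polynomial-time flow or bipartite-matching formulation can solve it unless it somehow exploits the bound on the number of blocks, which flow formulations have no mechanism to do. Bipartite matching does handle the size-$2$ blocks (one outside vertex each, a system-of-distinct-representatives instance), but the size-$1$ blocks are genuinely unresolved, and with them the whole of step (ii). Note also that brute-forcing the pairs gives only $n^{O(r)}$ time, i.e.\ an XP bound, not FPT.

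The gap is fixable, but with a different tool: since at most $2r$ vertices outside $S$ are ever needed, color-coding (randomly color the outside vertices with $O(r)$ colors and run a dynamic program over subsets of colors and blocks), or representative-set machinery for $3$-set packing, decides your step (ii) in FPT time. Alternatively you can adopt the paper's device and test only for perfect matchings \emph{inside} candidate sets of size exactly $3r$, which removes outside vertices from the picture altogether; your formulation does have the merit that it cleanly handles branches where coverage completes with a transversal strictly smaller than $3r$, a case the exact-size approach glosses over. As written, however, the proof is incomplete precisely at its self-identified central step.
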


\begin{proof}

We give an algorithm that follows the same basic ideas as the FPT algorithm for
\eds\ given in \cite{Fernau06}. We will not worry too much about optimizing the
parameter dependence, instead focusing on establishing fixed-parameter
tractability.

Consider the 3-uniform hypergraph formed as follows: we have a vertex for every
given card and a hyperedge of size 3 for each Set of the input instance.
Suppose that there exists a set of $r$ Sets such that removing the cards they
consist of would destroys all Sets. Then, there must exist a hitting set in this
hypergraph of size exactly $3r$ (since the $r$ removed Sets cannot overlap).

We will list all hitting sets of size $3r$ with a simple branching algorithm as
follows: start with an empty hitting set and as long as the size of the
currently selected hitting set has size $<3r$ find a hyperedge that is
currently not covered. For each non-empty subset of the vertices of this
hyperedge (there are 7 choices) add these vertices to the hitting set and
remove all hyperedges they hit. Recursively continue until either all
hyperedges are hit or the hitting set has size more than $3r$. If we have a
hitting set of size exactly $3r$ add it to the list.

For each hitting set $S$ of size exactly $3r$ do the following: check if the
hypergraph induced by $S$ has a perfect matching, that is, a set of $r$
disjoint hyperedges covering all vertices. This can be done in time exponential
in $r$. If the answer is yes, we have found a set of $r$ Sets that overlaps all
other Sets. If the answer is no for all hitting sets then we can reject. \qed

\end{proof}

\begin{corollary} \label{cor:setfpt}
\minset\ parameterized by the number of Sets that will be picked is FPT.
\end{corollary}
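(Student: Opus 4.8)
The plan is to observe that \minset\ is simply the \eds\ problem in 3-uniform hypergraphs, restricted to those hypergraphs that arise from SET instances, and that the translation between the two problems preserves the parameter exactly. Since Theorem \ref{thm:edsfpt} already establishes fixed-parameter tractability for the more general problem on arbitrary 3-uniform hypergraphs, the corollary will follow immediately once I verify that this translation is a polynomial-time, parameter-preserving reduction.

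First I would describe the construction that maps a \minset\ instance to a 3-uniform hypergraph, which is precisely the one used at the start of the proof of Theorem \ref{thm:edsfpt}: introduce one vertex for every dealt card and one hyperedge for every triple of cards forming a valid Set. Checking whether a given triple forms a Set amounts to verifying, for each attribute, that the three values either all agree or are all distinct, which takes time linear in the number of attributes. As there are only $O(m^3)$ triples to examine (and in fact, using the property noted earlier that any pair of cards extends to at most one Set, only $O(m^2)$ hyperedges actually appear), the whole hypergraph is built in polynomial time. Next I would argue equivalence under this map with the unchanged parameter $r$: a collection of $r$ Sets whose removal destroys all remaining Sets corresponds exactly to a set of $r$ hyperedges such that every other hyperedge shares a vertex with one of them, i.e.\ an edge dominating set of size $r$. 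Feeding the constructed hypergraph and the parameter $r$ into the algorithm of Theorem \ref{thm:edsfpt} therefore decides the \minset\ instance in FPT time.

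I do not anticipate a genuine obstacle, as the corollary is an immediate consequence of Theorem \ref{thm:edsfpt}; the only point demanding care is confirming that the picked Sets may be taken to be pairwise disjoint, so that the \eds\ algorithm's assumption of a size-$3r$ hitting set is justified. This follows from the fact that, in the game, removing a Set physically deletes its three cards, so no card can belong to two chosen Sets, making the chosen collection an \emph{independent} edge dominating set occupying exactly $3r$ vertices.
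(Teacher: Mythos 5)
Your proposal is correct and follows essentially the same route as the paper, which simply states that the corollary follows directly from Theorem \ref{thm:edsfpt} (relying on the card-to-vertex, Set-to-hyperedge construction already described at the start of Section \ref{sec:multiround}). Your additional observations---that the hypergraph is constructible in polynomial time, that the parameter is preserved, and that the picked Sets are automatically disjoint because their cards are physically removed---are exactly the details the paper leaves implicit.
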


Corollary \ref{cor:setfpt} follows directly from Theorem \ref{thm:edsfpt}.

\section{A two player game} \label{sec:2pset}

In this section, we consider a natural two-player turn-based game 
that we call \tpset.  
Suppose that an arbitrary set of cards is on the table and two opposing
players take turns playing. Each player may select three cards that form a Set
and remove them from play. No additional cards are dealt. The game goes on
until a player is unable to find a Set, in which case she loses. 

Unlike the solitaire games \maxset\ and \minset, here players must exercise some strategic
thinking: each is trying not only to maximize the number of Sets she will
collect but also to prevent the opponent from forming a set.

We exploit the ideas developed for the single-player game \minset. Although 
we will not completely settle the complexity of the two-player version, the
reduction given in Theorem \ref{thm:minset} can be used to establish directly
that the two-player version of Set is at least as hard as \arc.

\arc\ is a two-player game played on an undirected graph. Two players take
turns selecting edges from the graph, under the constraint that the edge they
pick cannot share a common endpoint with any previously selected edges.  The
first player unable to move loses. 

Though the complexity of the related version of the problem called \node\ 
was settled in the '70s by Schaefer \cite{Schaefer78}, \arc\ has been
open ever since. It is not hard to see that, since the game in \arc\ ends
essentially when the two players have formed a minimal independent 
edge dominating set, we can say the following:

\begin{corollary}

\tpset\ is at least as hard as \arc.

\end{corollary}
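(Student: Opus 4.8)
The plan is to reuse, essentially verbatim, the polynomial-time construction from Theorem~\ref{thm:minset}, which maps a graph $G(V,E)$ to a collection of SET cards consisting of one vertex card $e_i$ (a unit vector with a $1$ in coordinate $i$) for each $i \in V$ and one edge card $2e_i + 2e_j$ for each $(i,j) \in E$. The structural fact I would lean on is the one already established in that theorem: among these cards the only realizable Sets are exactly the triples $\{i, j, (i,j)\}$ formed by the two vertex cards of an edge together with that edge's card. This gives a bijection between the legal Sets on the table and the edges of $G$, and it is this bijection that I would promote into an isomorphism of combinatorial games.

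First I would check that a single move corresponds across the two games. Playing the Set $\{i, j, (i,j)\}$ in \tpset\ removes the two vertex cards $i$ and $j$ (the edge card $(i,j)$ is also removed, but it participates in no other Set, so its removal is immaterial). Any other edge $(i,k)$ incident to $i$ loses the vertex card $i$ and hence can no longer form its Set, while the Set of any edge disjoint from the chosen edge survives intact, since both of its vertex cards and its own edge card remain on the table. Thus choosing a Set removes precisely the moves forbidden after selecting the corresponding edge in \arc, namely the edges sharing an endpoint with it. The set of available Sets after a move matches the set of edges available in \arc\ after the corresponding move.

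The hard part, and the only point that really needs care, is to show that this correspondence is preserved throughout an entire play and not merely on the first move. I would argue this by a straightforward induction on the number of turns: the invariant is that after any sequence of moves the remaining cards are exactly the vertex cards of the vertices untouched so far together with the edge cards of the surviving edges, so that ``available Sets'' always equals ``edges vertex-disjoint from all previously chosen edges.'' Since the Sets picked so far are pairwise card-disjoint, the chosen edges form a matching, and the invariant makes the position in \tpset\ indistinguishable from the corresponding position in \arc. Once the invariant is in place the two games are literally the same game under the bijection.

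Finally I would read off the conclusion. Under this isomorphism the player to move in \tpset\ has a legal Set available if and only if the player to move in \arc\ has a legal edge available; in particular the first player is unable to move in one game exactly when she is unable to move in the other, and both games declare that player the loser. Hence the first player has a winning strategy in \tpset\ on the constructed instance if and only if she has one in \arc\ on $G$. Because the construction of Theorem~\ref{thm:minset} is polynomial in the size of $G$, deciding the winner of \tpset\ suffices to decide the winner of \arc, which is exactly the claim that \tpset\ is at least as hard as \arc.
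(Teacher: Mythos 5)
Your proof is correct and follows essentially the same route as the paper: the paper also obtains this corollary directly from the reduction of Theorem~\ref{thm:minset}, observing that since the only Sets in the constructed instance correspond bijectively to edges of $G$, playing \tpset\ on those cards is exactly playing \arc\ on $G$. Your added induction making the game isomorphism explicit is a careful elaboration of what the paper leaves as an observation, not a different argument.
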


It will likely be hard to find a polynomial-time algorithm for \arc, and
therefore also for \tpset. A slightly more general version of \arc\ is
mentioned to be PSPACE-complete in \cite{Schaefer78}, while the natural
generalization of \arc\ to hypergraphs with unbounded hyperedge size is
PSPACE-hard by the complexity of poset games \cite{Grier13}. 

The 2-player SET problem on graphs is a natural restriction of \arc,
though this version of SET, unlike its hypergraph counterpart turns out 
to be trivial: if the size of the Sets (i.e. the number 
of different values) is 2 then any 2 cards form a 
Set; thus the 2-player problem is equivalent to \arc\ on complete graphs 
and becomes a simple matter of parity of the number of nodes. 

Let us consider a natural parameterization of \tpset. In this problem, the
question is whether a winning outcome for the first player can be achieved
within at most $r$ rounds (with $r$ being the parameter). Parameterized
problems of this form have been considered in the past, beginning with
\cite{AbrahamsonDF95}, where it was established that the $r$-move parameterized
version of \node\ is AW[*]-hard. \tpset\ (and thus \arc\ too), as we show in
Theorem \ref{thm:2p-FPT}, parameterized by the number of rounds turns out to be
FPT.

\begin{theorem} \label{thm:2p-FPT}
\tpset\ parameterized by the number of allowed rounds $r$ is FPT.
\end{theorem}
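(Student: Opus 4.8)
The plan is to decide the game by explicit minimax over a game tree, after showing that both the \emph{depth} and the \emph{branching} of the relevant tree are bounded by a function of $r$. The starting point is the $3$-uniform hypergraph $H$ of all Sets from Section \ref{sec:multiround}, which is \emph{linear}: since every pair of cards lies in at most one Set, two distinct Sets share at most one card. First I would bound the depth. A winning outcome within $r$ rounds requires the game to actually terminate within $r$ rounds, and the game terminates exactly when the removed cards form a transversal (hitting set) of $H$. Hence if $\tau(H)>3r$ the first player cannot win within $r$ rounds and we reject; this test is itself FPT, e.g.\ via the branching of Theorem \ref{thm:edsfpt} or standard $3$-hitting-set search. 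Otherwise fix a transversal $S$ with $|S|\le 3r$. The key monotonicity is that $S$ minus the already-removed cards stays a transversal of the surviving Sets throughout play, so \emph{every} move deletes at least one vertex of the current transversal; consequently the game always ends within $|S|\le 3r$ rounds and the whole game tree has depth at most $3r$.

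The harder task is to bound the branching, i.e.\ the number of \emph{inequivalent} moves at a position, and this is where the affine structure of SET over $\mathbb{F}_3$ enters and constitutes the main obstacle. I would classify each Set by its intersection with $S$. Sets meeting $S$ in two or three cards are determined by those $S$-cards (the third card, if any, being unique), so there are only $\binom{|S|}{2}+\binom{|S|}{3}$ of them. The problematic Sets meet $S$ in a single card $a$ and have the form $\{a,w,w'\}$ with $w'=-a-w$; there can be unboundedly many. Here I would exploit the affine structure: pairing $w\mapsto -a-w$ through an anchor $a\in S$ is a reflection, and composing reflections through two anchors $a,a'$ gives the translation $w\mapsto w+(a-a')$. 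Thus the connected components of the ``outside'' cards under these pairings live in at most two cosets of the subgroup spanned by $\{a-a':a,a'\in S\}$, whose dimension is at most $|S|-1$. Every such component therefore has size at most $2\cdot 3^{|S|-1}$, a bound depending only on $r$. The board then decomposes into a bounded \emph{core} ($S$ together with the $\le\binom{|S|}{2}$ cards of the two- and three-anchor Sets) plus a family of bounded-size \emph{gadgets}, each attached to $S$ through its anchors.

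Finally I would kernelize and solve. Two gadgets with the same internal structure and the same attachment to $S$ are exchangeable by an automorphism fixing $S$, hence interchangeable as moves; since a play of length $\le r$ touches at most $r$ gadgets, it suffices to keep $r+1$ representatives of each gadget type, and there are only boundedly many types (bounded-size configurations labelled by anchors from the bounded set $S$). This yields a kernel of $f(r)$ cards on which we run minimax over the now bounded-depth, bounded-branching game tree, reporting whether the first player can force the second to be unable to move at some odd round $\le r$. The obstacle I expect to be genuinely delicate is the \emph{correctness} of the gadget reduction for a game rather than for an optimization problem: one must argue that truncating each family of identical gadgets to $r+1$ copies preserves the ``win within $r$ rounds'' value, despite gadgets sharing an anchor in $S$ being coupled through the deletion of that anchor. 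I would establish this by a compression/strategy-stealing argument showing that no strategy realizing a win in $\le r$ rounds needs to commit to more than $r+1$ copies of any one type. Since \arc\ on graphs is exactly the $2$-uniform specialization (with $S$ a vertex cover and the gadgets degenerating into pendant edges and twin neighbourhoods), the same algorithm yields the claimed corollary for \arc.
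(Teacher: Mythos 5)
Your proposal is correct in substance but takes a genuinely different route from the paper's. Both proofs start identically: reject unless the Set-hypergraph has a hitting set $S$ with $|S|\le 3r$, computable in FPT time. From there the paper translates the game into an \emph{ordered, $r$-partite} version of \node\ on a conflict graph with one vertex per (Set, round) pair, colours these vertices by the hitting-set cards their Sets contain, proves a degree bound (a vertex missing colour $i$ has at most $2$ neighbours among colour-$i$-only vertices of the last layer) using \emph{only} the linearity property that two cards lie in at most one Set, and then kernelizes layer by layer via equivalence classes, obtaining $|V_i|\le |V_{i+1}|^{O(r)}$ and a $2^{2^{O(r)}}$ kernel. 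You instead stay on the card side and invoke the affine structure of $\mathbb{F}_3^n$: the reflections $w\mapsto -a-w$ through anchors compose to translations $w\mapsto w+(a-a')$, so single-anchor Sets decompose the off-$S$ cards into components of size at most $2\cdot 3^{|S|-1}$, and you kernelize by keeping $r+1$ copies of each gadget type. Your flagged worry about truncation does resolve: same-type gadgets attach to the \emph{same} anchors, so untouched copies remain interchangeable regardless of which anchors have been deleted, each move touches at most one gadget, and a type-preserving bijection between touched copies gives the required strategy transfer; one harmless patch you omit is that pairing components containing core cards (third cards of two-anchor Sets) --- at most ${|S| \choose 2}$ of them, each of bounded size --- must be absorbed into the core and exempted from truncation. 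The main trade-off: your key lemma genuinely needs the $\mathbb{F}_3$ structure and fails for general linear $3$-uniform hypergraphs (pairings alternating between two anchors can form arbitrarily long paths while respecting linearity, so components are unbounded), whereas the paper's argument uses only linearity --- precisely the generalization the paper stresses in its conclusions. For the theorem as stated this does not matter, and Corollary \ref{thm:arcfpt} still follows from your version (via the reduction of Theorem \ref{thm:minset}, or by your direct $2$-uniform degeneration, which essentially reproduces the paper's alternative vertex-cover/twin-class algorithm), with a comparable doubly exponential kernel.
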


\begin{proof}

First, observe that hypergraph $G$ where the game is played should have an edge
dominating set of size at most $r$ and thus a hitting set of size at most $3r$.
If there is no hitting set of size at most $3r$, simply reply no because it's
then impossible for the first player to end the game in $r$ moves.  Otherwise
we compute such a hitting set.  This can be done in FPT time 
\cite{wahlstrom2007algorithms}.
Name the vertices of the hitting set $h_1, h_2, \ldots, h_s$, where $s$ is the 
size of the hitting set.

We can now reduce our problem to an ordered version of \node\ on an $r$-partite
graph. In this version the input is an undirected simple graph $G'(V,E)$ where
$V$ is partitioned into $r$ independent sets $V_1,\ldots,V_r$. The two players
alternate turns, and in turn $i$ the current player must select a vertex from
$V_i$ so that it has no edges to previously selected vertices.

We can construct $G'$ from $G$ as follows: for each hyperedge $e$ of $G$
construct $r$ vertices $e_1,\ldots,e_r$ in $G'$, such that $e_i\in V_i$ for all
$i$. If two hyperedges $e,f$ share an endpoint in $G$ connect the vertices
$e_i,f_j$ for all $i\neq j$. It is not hard to see that player 1 has a winning
strategy in the new game if and only if he has a winning strategy of length at
most $r$ in the original game.

We will say that a vertex $e_i$ of $G'$ has color $j$ when the hitting set
vertex $h_j$ is contained in the hyperedge $e$. Notice that all vertices of
$G'$ have some color, and none can have more than three. Also, for any pair of
colors $i,j$ there is at most one vertex in each partite set that has both
colors $i$ and $j$, since by the Set property any two vertices of the original
hypergraph have a unique third vertex with which they form a Set. Finally, note
that for each $i$ such that $ 1\le i \le s$, the vertices with color $i$ form an
$r$-partite complete subgraph in $G'$, since they all come from hyperedges that
contain $h_i$. An example of the construction appears in figure \ref{fig:2ptrans}.

\begin{figure}
\centering
\includegraphics[scale=1]{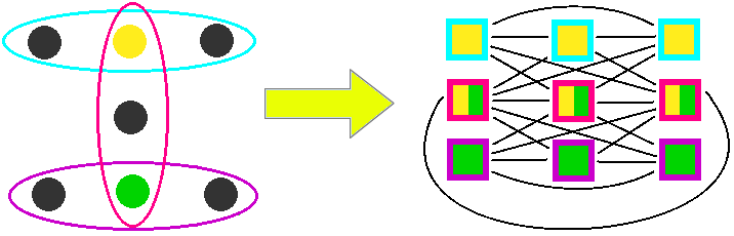}
\caption{An example of the construction of the $r-$partite graph $G'$ from 
hypergraph $G$ for $r=3$.}
\label{fig:2ptrans}
\end{figure}

Partition the set $V_r$ into subsets such that each set contains vertices with
exactly the same colors. The subsets where vertices have two or three colors
have, as we argued, size 1. Consider now the subset of vertices
$S_{r,i}\subseteq V_r$ which have color $i$ only. We first have the following:

\begin{claim} \label{claim:degree}

A vertex $e_j\in V_j$, with $j\neq r$ and $e_j$ not having color $i$ can have
at most $2$ neighbors in $S_{r,i}$.  

\end{claim}

\begin{proof}

Suppose that $e_j$ has three distinct neighbors in $S_{r,i}$. Let the
hyperedges corresponding to these vertices be $\{h_i,u_1,v_1\},\
\{h_i,u_2,v_2\},\ \{h_i,u_3,v_3\}$. Notice that $h_i$ is the only hitting set
vertex in these sets, as $i$ is the only color of vertices in $S_{r,i}$. Now,
$e_j$ contains $h_j$ which is distinct from $h_i$ and all other vertices in
these sets. So, in order for $e_j$ to intersect all three of these sets, two of
them must share a common vertex other than $h_i$. But this contradicts the Set
property that any two elements have a unique third with which they form a Set.
\qed
\end{proof}

From Claim \ref{claim:degree} we now know that if $S_{r,i}$ contains at least
$2r$ vertices, then it will be possible to play it if and only if no vertex with
color $i$ is played in the first $r-1$ moves. Perform the following
transformation: delete all vertices of $S_{r,i}$ and replace them with a single
vertex that is connected to all vertices in other partite sets that have color
$i$.

The above reduction rule is safe. To see this, consider any play of the first
$r$ moves. If a vertex of color $i$ is used, no vertex from $S_{r,i}$ can be
used in the last move in both graphs. If color $i$ is not used, some vertex of
$S_{r,i}$ can be used in both graphs, and it is immaterial which will be played
since this is the last move.

Because of the above we can now assume that $|S_{r,i}|\le 2r$. Thus,
$|V_r|=O(r^2) $, because we have $s=O(r)$ sets $S_{r,i}$, as well as the single
vertices which may have a pair of colors.

We will now move on to the preceding partite sets using a similar argument. We
need the following definition: if two vertices $e_i,f_i$ of same color $c$ in the 
same partite set $V_i$ have exactly the same neighbors in all sets $V_j$ for $j>i$, then they
are called equivalent. We call such vertices equivalent because, if both
are available to be played at round $i$ they can be selected interchangeably
without affecting the rest of the game. Observe, that equivalent vertices have 
the same neighbors in $V_j$ for all $j\ge i+1$. Also observe that each equivalence
class cannot have more than $2$ neighbors. Namely, if two vertices of same color 
$c$ have at least one common neighbor then from claim \ref{claim:degree} this 
common neighbor cannot have more than these two vertices as common neighbors from 
color class $c$. On the other hand, if two or more vertices of color $c$ both have 
no neighbors, then we can all merge them into a single vertex.

We will use this fact to show that we can reduce the graph so that in the end
$|V_i|\le |V_{i+1}|^{O(r)}$. 
Initially it may appear that the argument would lead to the conclusion
that $|V_i|\le 2^{|V_{i+1}|}$, since we have a different equivalence class of
each possible set of neighbors that a vertex of $V_i$ can have in $V_{i+1}$.
However, observe that each vertex of $V_i$ can have at most $2s$ neighbors with
which it does not share a color in $V_{i+1}$, since from Claim
\ref{claim:degree} it can have at most 2 neighbors in each group that
correspond to a different color. Thus, the possible neighborhoods are at most
${|V_{i+1}| \choose 2s} = |V_{i+1}|^{O(r)}$.

From the above it follows that the order of $G'$ after applying the above
preprocessing exhaustively is $2^{2^{O(r)}}$, which gives a kernel. \qed

\end{proof}

The proof only uses the property of SET that every pair of cards has a unique 
third that forms a Set with them. Thus the game is FPT even when played on the 
more general class
of 3-uniform hypergraphs having this property. Also, Corollary \ref{thm:arcfpt} 
follows directly from Theorems \ref{thm:minset} and \ref{thm:2p-FPT}:

\begin{corollary} \label{thm:arcfpt}
The natural parameterization of \arc\ by the number of rounds played is FPT.
\end{corollary}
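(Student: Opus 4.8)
The plan is to exhibit a parameter-preserving reduction from \arc\ to \tpset\ — essentially the same reduction that already underlies the earlier corollary stating that \tpset\ is at least as hard as \arc, but now taking care that it leaves the round bound untouched — and then to invoke the fixed-parameter tractability of \tpset\ established in Theorem \ref{thm:2p-FPT}.

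First I would take an instance of \arc\ consisting of a graph $G(V,E)$ together with the round bound $r$, and apply the card construction from the proof of Theorem \ref{thm:minset}: over $\mathbb{F}_3^{|V|}$ introduce a unit-vector card $e_i$ for every vertex $i\in V$ and a card $2e_i+2e_j$ for every edge $(i,j)\in E$. As observed there, the only Sets are the triples $\{e_i, e_j, 2e_i+2e_j\}$, each of which is in bijection with an edge of $G$, since any three cards summing to $0 \bmod 3$ must be of this form.

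The heart of the argument is to verify that a single move of \tpset\ on these cards is precisely a legal move of \arc\ on $G$. Selecting the Set encoding an edge $(i,j)$ removes the two vertex cards $e_i,e_j$ (along with the edge card), so every remaining edge card incident to $i$ or $j$ permanently loses one of the two vertex cards it needs and can never again complete a Set; conversely, an edge card whose endpoints are both still present keeps all three of its Set-cards available. Hence after any partial play the Sets still selectable are exactly the edges of $G$ vertex-disjoint from the already-chosen edges — the Arc Kayles independence constraint — and a player is first unable to move exactly when the chosen edges form a maximal set of pairwise non-adjacent edges. It follows that player 1 has a winning strategy ending the game within $r$ rounds in \arc\ on $G$ if and only if player 1 has such a strategy within $r$ rounds in \tpset\ on the constructed instance.

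Since the construction is polynomial and leaves the parameter $r$ unchanged, it is a (parameter-preserving) FPT reduction. Composing it with Theorem \ref{thm:2p-FPT}, which decides in FPT time whether player 1 wins \tpset\ within $r$ rounds, yields an FPT algorithm for the corresponding question on \arc. The only point requiring genuine care is the exact equivalence of the two game trees — that legal moves, and in particular terminal positions, correspond round by round — but this is immediate from the bijection between Sets and edges and from the fact that deleting an edge's two vertex cards destroys precisely the incident edges.
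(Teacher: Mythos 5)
Your proposal is correct and is exactly the paper's intended argument: the paper states that this corollary ``follows directly from Theorems \ref{thm:minset} and \ref{thm:2p-FPT},'' i.e., from composing the vertex-card/edge-card construction of Theorem \ref{thm:minset} (under which Sets are in bijection with edges, so \tpset\ play on the constructed cards is move-for-move \arc\ play on $G$) with the FPT algorithm of Theorem \ref{thm:2p-FPT}, with the parameter $r$ unchanged. You have merely spelled out the details the paper leaves implicit.
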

 
The proof of Theorem \ref{thm:2p-FPT} gives a doubly exponential parameter dependence.
Below we present a simpler algorithm which also implies a better complexity.

\begin{proof} 
(\emph{Sketch.} ) 
First, observe that graph $G$ where the game is played should have a vertex
cover of size at most $s=2r$. If not, reply no. The remaining vertices
forming an independent set can be divided into $2^{s}$ equivalence classes depending on
their neighbors in the vertex cover. 

If an equivalence class is large enough, playing any edge in an equivalence
class can be replaced by playing any other from the class without affecting the
rest of the game.  Namely, if an equivalence class is joined to $t$ vertices in
the vertex cover, there can be at most $t\leq s$ vertices played from this
class and it is unimportant which ones are played.  Thus, if a class has more
than $t$ vertices we can simply leave it with $t$ vertices and delete the rest.
 
Because of the above we have $2^s$ groups of vertices each containing at most
$s$ vertices and a vertex cover of size $s$. This means that the graph contains
at most $2^{2s}$ edges. Since in each turn a player selects an edge the number
of possible plays is at most $(2^{2s})^s = 2^{O(r^2)}$. Simply enumerating them
all gives an FPT algorithm. \qed

\end{proof}

\section{Conclusions and Open Problems} \label{sec:conclusions}

In this paper we studied the computational complexity of the game of SET and
presented some interesting connections with other well-studied problems, such
as \rdm, \eds\ and \pack. 

The one-round case of SET is now fairly well-understood. However there are
quite a few interesting open problems one might consider in the multi-round
case, especially the two-player version \tpset.  It remains unknown whether this game is
PSPACE-Complete. However, proving the hardness of \arc\ on graphs would settle the complexity
of this problem as well (which is an interesting open question on its own
accord). Staying on \arc, it might be interesting to show whether the game played
on general 3-uniform hypergraphs is FPT. We remind the reader that our proof that 
\tpset\ is FPT is based on the property of SET that each pair of cards can have at 
most one third with which they all form a Set. That property is vital 
for the proof since it establishes that the line graph has essentially bounded degree. 
This is not true for a general 3-uniform hypergraph though.

\bibliography{bibl}

\newpage
\appendix

\section*{Appendix}
\subsection*{Definitions}

Here, we give the definitions of the problems that we use for the convenience of the reader of this manuscript and for the sake of completeness.


\noindent\rdm: 

\begin{itemize}
\item[] Input: A multigraph $G(V,E)$, with $V = V_1\cup V_2\cup \ldots\cup V_r$ and $|V_i| = k$ for all $i= 1, \ldots r$, and $E \subset V_1\times V_2 \times \ldots V_r$.
\item[] Question: Does there exist a perfect matching in $G$? I.e, does there exist a set of $k$ disjoint multiedges $\{e_1, e_2, \ldots e_k\}$ such that $\bigcup_{i} e_i = V$?
\item[] We call each $V_i$ a \emph{dimension} of $G$. There are $r$ dimensions in $G$ and each dimension has $k$ different possible \emph{values}.\item[] In the parameterized version that we consider, the parameter is $k$.
\end{itemize}

\noindent\pack:

\begin{itemize}
\item[] Input: A 3-uniform hypergraph $G(V,E)$ and a natural number $k$.
\item[] Question: Does $G$ have a set packing of size $k$? In other words, does there exist a set of disjoint hyperedges $E'\subset E$ with $|E'|\geq k$?
\end{itemize}

\noindent\eds:

\begin{itemize}
\item[] Input: A (hyper)graph $G(V,E)$ and a natural number $k$.
\item[] Question: Does $G$ have an independent edge dominating set of size $k$? In other words, does there exist a disjoint set of (hyper)edges $E'\subset E$ with $|E'|\leq k$ such that every (hyper)edge $e\in E$ shares at least one end-point with one or more (hyper)edges in $E'$?
\end{itemize}




\noindent\kmc:

\begin{itemize}
\item[] Input: A $k-$partite graph $G(V,E)$, with $|V_i| = n$ for all $i=1 \ldots k$, $V_1, V_2, \ldots V_k$ pairwise disjoint, and $V = V_1\cup V_2\cup \ldots \cup V_k$.
\item[] Question: Does $G$ have a clique of $k$ vertices? In other words, does there exist a tuple $(v_1, v_2, \ldots, v_k) \in V_1\times V_2 \times \ldots V_k$, such that $(v_i, v_j) \in E$ for all $i \neq j$?
\item[] Parameter: $k$
\end{itemize}

\noindent 3-SAT
\begin{itemize}
\item[] Input: A logic formula $\phi$ written in CNF that contains $n$ variables and $m$ clauses, where each clause contains at most 3 literals.
\item[] Question: Does, does there exist an assignment of truth values to the variables such that all the clauses of $\phi$ are satisfied?
\end{itemize}

\noindent\arc:
\begin{itemize}
\item[] Input: A (hyper)graph $G(V,E)$. 
\item[] Rules: Two players take turns in picking (hyper)edges from $E$ such that picked (hyper)edges don't share endpoints. Player A starts. First player left without an available (hyper)edge to pick loses.
\item[] Question: Is there a winning strategy for player A? 
\end{itemize}

\noindent\node:
\begin{itemize}
\item[] Input: A graph $G(V,E)$. 
\item[] Rules: Two players take turns in picking vertices from $V$ such that picked vertices form an independent set. Player A starts. First player left without an available vertex to pick loses.
\item[] Question: Is there a winning strategy for player A? 
\end{itemize}

\end{document}